\DeclareMathOperator*{\argmin}{arg\,min}
\newcommand{\At}{\tilde{A}}
\newcommand{\Qt}{\tilde{Q}}
\newcommand{\utp}[2]{u_{#1}^{\pi_{#2}}}
\newcommand{\uogdt}[1]{u^{\mathrm{ogd}}_{#1}}
\newcommand*{\rom}[1]{\expandafter\@slowromancap\romannumeral #1@}
\newtheorem{theorem}{Theorem}[section]
\newtheorem{lemma}[theorem]{Lemma}
\newtheorem{definition}[theorem]{Definition}
\newtheorem{assumption}{Assumption}
\title{ Online Linear Quadratic Tracking with\\ Regret Guarantees}
\author{Aren Karapetyan, Diego Bolliger, Anastasios Tsiamis, Efe C. Balta,  and John Lygeros
\thanks{This work has been supported by the Swiss National Science Foundation under NCCR Automation (grant agreement $51\text{NF}40\_180545$), and by the  European Research Council under the ERC Advanced grant agreement  $787845$ (OCAL). Aren Karapetyan and Diego Bolliger contributed equally to this work.}
\thanks{Aren Karapetyan, Anastasios Tsiamis, and John Lygeros are with the Automatic Control Laboratory, Department of Information Technology and Electrical Engineering, ETH Z{\"u}rich, 8092 Z{\"u}rich, Switzerland (e-mail: \{akarapetyan, atsiamis, jlygeros\}@control.ee.ethz.ch).}
\thanks{Diego Bolliger is with the School of Engineering, ZHAW Zurich University of Applied Sciences, 8400 Winterthur, Switzerland (e-mail: diego.bolliger@zhaw.ch).}
\thanks{Efe C. Balta  is  with the Control and Automation Group, Inspire AG, 8005 Zürich, Switzerland (e-mail: efe.balta@inspire.ch).}
}
\newcommand{\ak} {\textcolor{black}}
\newcommand{\blue} {\textcolor{black}}
\begin{document}
\maketitle
\thispagestyle{empty}

\newcommand\copyrighttext{%
  \scriptsize \copyright 2023 IEEE. This version has been accepted for publication at the IEEE Control Systems Letters, DOI: 10.1109/LCSYS.2023.3345809. Personal use of this material is permitted.  Permission from IEEE must be obtained for all other uses, in any current or future media, including reprinting/republishing this material for advertising or promotional purposes, creating new collective works, for resale or redistribution to servers or lists, or reuse of any copyrighted component of this work in other works.}
\newcommand\copyrightnotice{%
\begin{tikzpicture}[remember picture,overlay]
\node[anchor=south,yshift=5pt] at (current page.south) {\fbox{\parbox{\dimexpr\textwidth-\fboxsep-\fboxrule\relax}{\copyrighttext}}};
\end{tikzpicture}%
}

\copyrightnotice

\begin{abstract}
Online learning algorithms for dynamical systems provide finite time guarantees for control in the presence of sequentially revealed cost functions. We pose the classical linear quadratic tracking problem in the framework of online optimization where the time-varying reference state is unknown \textit{a priori} and is revealed after the applied control input. We show the equivalence of this problem to the control of linear systems subject to adversarial disturbances and propose a novel online gradient descent-based algorithm to achieve efficient tracking in finite time. We provide a dynamic regret upper bound scaling linearly with the path length of the reference trajectory and a numerical example to corroborate the theoretical guarantees. 
\end{abstract}

\begin{IEEEkeywords}
Optimal Tracking, Online Control.
\end{IEEEkeywords}

\section{Introduction}
\label{sec:introduction}

\IEEEPARstart{L}{inear} quadratic tracking (LQT) is the natural generalization of the optimal linear quadratic regulator (LQR) for the setting where the goal is not to drive the state to the origin but to a certain reference. The reference trajectory need not be necessarily time-invariant and in the classic formulation of the problem is known in advance. This is a reasonable assumption in many practical applications, such as aircraft tracking of a predetermined trajectory or precision control in industrial process engineering. However, in other scenarios, for example, in tracking the output of a secondary agent whose dynamics are unknown and/or the measurements are imperfect, the prediction of the next reference point is non-trivial. In these cases the reference trajectory is only revealed sequentially, after the action has been taken, suggesting the need for an online or adaptive algorithm that will learn or adapt to the dynamics of the reference-generating agent.

In this letter, we study the LQT problem with an unknown reference trajectory. We pose the problem in the framework of online convex optimization (OCO) subject to the dynamics constraint of the system \ak{\cite{hazan2020nonstochastic,foster2020logarithmic}}. In particular, the tracking problem is recast into an equivalent regulation problem with a redefined state that evolves with linear dynamics subject to additive adversarial disturbances. \ak{In the spirit of online decision-making under computational and memory constraints, our goal is to develop a gradient-based algorithm that is fast and simple to implement and requires no large memory.
To this end,} we show how classical online gradient descent (OGD) may fail to achieve optimal tracking and propose a modified algorithm, called SS-OGD (steady state OGD) that is guaranteed to achieve the goal under mild conditions. Given the online nature of the algorithm, its performance is quantified through the means of dynamic regret that compares the accumulated finite time cost of a given algorithm to that of an optimal benchmark that solves the LQT problem with an \emph{a priori} knowledge of the reference trajectory. We provide \ak{a dynamic regret bound that scales linearly with} the path length of the reference trajectory. 

The LQT problem for sequentially revealed adversarial reference states \ak{is studied mostly with policy regret guarantees, with one of the first works \cite{abbasi2014tracking} suggesting a relatively computationally heavy algorithm.} In a more recent line of work \cite{nonhoff2020online} the authors introduce a memory-based, gradient descent algorithm and in \cite{nonhoff2022online} tackle the constrained tracking problem. \ak{Several works also provide dynamic regret guarantees for tracking of unknown targets, however, their settings differ from ours. In \cite{balta2022regret}, the authors analyze an output tracking scheme but assume an iterative setting, while in \cite{li2019online} a window of predictions is available. Without predictions, their regret order is determined by that of a fixed oracle controller.  In \cite{li2019online}, the authors also provide a lower bound for the dynamic regret in terms of the reference path length, matching the same order as our proposed scheme. Gradient-based algorithms, as the ones we study, have also been developed in the context of online feedback optimization \cite{hauswirth2021optimization}. There,  in contrast to our setting, the dynamics are generally assumed to be unknown, but, crucially, the cost functions are fixed over the horizon, and the regret is not analyzed. Several recent works, e.g. \cite{cothren2022online}, consider a similar setting with time-varying costs. These, however, are allowed to be estimated offline by training, incompatible with our setting, and without regret guarantees.}

\textit{Notation}: The set of positive real numbers is denoted by $\mathbb{R}_+$  and  that of non-negative integers by $\mathbb{N}$.  For a matrix $W$ the spectral radius and the spectral norm are denoted by $\rho(W)$, and  $\|W\|$, respectively, and $\lambda_{min}(W)$ denotes its minimum eigenvalue. We define $\lambda_W := \frac{1+\rho(W)}{2}$; one can show that if $\rho(W)<1$, there exists a $c_W \in \mathbb{R}_+$ such that for all $k\geq 1$ $\|W^k\|\leq c_W{\lambda_W}^k$. For a given vector $x$, its  Euclidean norm is denoted by $\|x\|$, and the one weighted by some matrix $Q$  by $\|x\|_Q = \sqrt{x^{\top}Qx}$.  

\section{Problem Statement}
\label{sec:problem_statement}

Consider the discrete-time linear time-invariant (LTI) dynamical system, given by
\begin{equation}
    \label{eq:lti_system}
    x_{t+1} = Ax_t + Bu_t, \quad \forall t \in \mathbb{N}, 
\end{equation}
where $x_t \in \mathbb{R}^n$ and $u_t \in \mathbb{R}^m$ are the state and input vectors respectively, and $A\in \mathbb{R}^{n\times n}, B \in \mathbb{R}^{n\times m}$ are \emph{known} system matrices. The goal of the optimal LQT problem is the tracking of a time-varying signal $r_t \in \mathbb{R}^n$, such that the cost
\begin{equation*}
    \|x_T-r_T\|_{P}^2+\sum_{t=0}^{T-1} \|x_t-r_t\|_Q^2 + \|u_t\|_R^2
\end{equation*}
is minimized for some weighting matrices $Q \in \mathbb{R}^{n\times n}$ and $R\in \mathbb{R}^{m\times m}$, and where  $P\in \mathbb{R}^{n\times n}$ is the solution of the discrete algebraic Riccati equation (DARE)\footnote{\ak{The final cost matrix is taken to be $P$ for simplicity. For other values of the terminal cost matrix the results still hold up to an additional constant} \ak{\cite{foster2020logarithmic}}.}
\begin{equation}
     P = Q + A^{\top}PA - A^{\top}PB(R+B^{\top}P B)^{-1}B^{\top}PA.
     \label{eq:riccati}
 \end{equation}

The LQT problem can be recast into an equivalent LQR formulation \cite{karapetyan2023implications} by considering instead the dynamics
\begin{equation}
    \label{eq:lti_system_noise}
    e_{t+1} = A e_t +B u_t + w_t, \quad \forall t \in \mathbb{N},
\end{equation}
with $e_t := x_t - r_t$ and $w_t := Ar_t - r_{t+1}$ for all $t\in\mathbb{N}$, and the corresponding cost function
\begin{equation}
    \label{eq:cost}
J(e_0,u) := \|e_T\|_{P}^2+\sum_{t=0}^{T-1} \|e_t \|_Q^2 + \|u_t\|_R^2.
\end{equation}
When the reference trajectory $r_t$, $t\in\mathbb{N}$ is known at the initial time a closed form solution for the optimal controller that solves the following optimization problem can be obtained
\begin{subequations}
\label{eq:optimal_offline_benchmark}
\begin{align}
   u^{\star}= &\argmin_u \quad J(e_0,u)\\
     &\text{subject to} \quad \eqref{eq:lti_system_noise} \quad \forall~ 0\leq t<T.
\end{align}
\end{subequations}
This controller, often referred to as the optimal offline noncausal controller, can be represented as  a linear feedback on the current state and the future reference \cite{lewis2012optimal, goel2022power}.

Departing from the classical formulation of tracking control, we assume that the reference signal is \emph{unknown} and is only revealed sequentially after the control input has been applied, similar to the adversarial tracking framework in \cite{abbasi2014tracking}. In particular, for each time step $0\leq t<T$:
\begin{enumerate}[label=\arabic*:]
    \item The state $x_t$ and the reference state $r_t$ are observed,
    \item The agent decides on an input $u_t$,
    \item The environment decides on the next reference $r_{t+1}$, which, in turn, determines $w_t$. The error state then evolves according to \eqref{eq:lti_system_noise}, incurring the following cost for the agent
    \begin{equation}
    \label{eq:online_objective}
    c_t(e_t,u_t):= \|Ae_t + Bu_t +w_t\|_Q^2 + \|u_t\|^2_R.
\end{equation}
\end{enumerate}
Note that the online cost~\eqref{eq:online_objective}, depends on the current input $u_t$ and the unknown disturbance $w_t$, and is therefore unknown to the decision maker at timestep $t$; it is revealed only at time $t+1$, after the input $u_t$ has been applied to the system. Our problem formulation fits the online learning framework, with the extra challenge of inherent dynamics. The goal of the controller is then to minimize the online cumulative cost\footnote{For consistency we require $c_{T-1}(e_{T-1},u_{T-1}):= \|Ae_{T-1} + Bu_{T-1} +w_{T-1}\|_{P}^2 + \|u_{T-1}\|^2_R$. To forego unnecessary cluttering of the notation, the separate treatment of the last timestep is implied implicitly.}
\begin{equation*}
    \sum_{t=0}^{T-1} c_t(e_t, u_t)=J(e_0,u)-\|e_0\|^2_Q.
\end{equation*}
This is the same as the LQR cost without the initial state, implying that the minimizers of both problems coincide.

We quantify the finite-time performance of the algorithm through the means of dynamic regret. Consider a policy $\pi:\mathcal{I}\rightarrow\mathbb{R}^m$, mapping from the available information set, $\mathcal{I}$, to the control input space. Its  dynamic regret, given a disturbance signal $w$,  is defined as
\begin{equation}
\label{eq:dynamic_regret}
    \mathcal{R}^{\pi}(w,e_0) = J(e_0,u^{\pi}) - J(e_0,u^{\star}),
\end{equation}
where $u^{\pi}$ is the input generated by $\pi$ and $u^{\star}$ is given by \eqref{eq:optimal_offline_benchmark}.

We allow the trajectory $r_t, \; t\in\mathbb{N}$ to be arbitrary, as long as it remains bounded. \begin{assumption}[Bounded trajectory]
    \label{ass:boundedness}
     There exists a $\bar{R} \in \mathbb{R}_+$, such that  $\|r_t\|\leq \bar{R}$ for all $t\in \mathbb{N}$.
\end{assumption}
The more abruptly a trajectory changes, the harder it is to achieve good tracking performance, especially if the trajectory is unknown beforehand. To capture  this inherent complexity of the problem \ak{with dynamic regret}, we use the well-established notion of path length \cite{li2019online}, \ak{\cite{zinkevich2003online}}.
\begin{definition}[Path Length]
\label{def:pathlength}
The path length of a reference trajectory $r_{0:T}\in \mathbb{R}^{n(T+1)}$ is $ L(T) = \sum_{t=0}^{T-1}\|\Delta r_t\|$, where $\Delta r_t = r_{t+1}-r_t$.
\end{definition}
For more random and abrupt changes in the trajectory, the path length is higher, and one expects the  performance of an online algorithm to deteriorate. Likewise, an efficient algorithm should improve as path length decreases. This is  captured quantitatively by showing at least a linear dependence of the algorithm's regret on the path length. \blue{One can instead choose the complexity term  to be the path length of the artificial disturbances $w_{0:T-1}\in \mathbb{R}^{n(T)}$. The resulting path length will then decrease the closer the reference dynamics are to the given system in a certain operator norm, but will scale linearly with time in the case of a constant mismatch between the two. Since we only assume bounded references, we allow for potentially random references without any underlying dynamics. Hence, we choose $L(T)$ as our complexity term. }
Under the following standard assumptions \ak{on stabilisability and detectablity\cite{green2012linear}} the LQR problem is well-posed.
\begin{assumption}[LQR is well-posed]
\label{ass:standard}
The system $(A,B)$ is stabilisable, the pair $(Q^{\frac{1}{2}},A)$ is detectable and $R\succ 0$. 
\end{assumption}

\section{The SS-OGD Algorithm}
\label{sec:ss_ogd_algorithm}

We consider a control law of the following form
\begin{equation}
    \label{eq:ogd_input}
    u_t = -Ke_t + v_t, \quad \forall 0\leq t<T,
\end{equation}
where $K = (R+B^{\top}PB)^{-1}B^{\top}PA$ is fixed to the optimal LQR gain, and $v_t$ is a correction term that should account for the unknown disturbances; we will employ online learning techniques to update the latter term. 

We investigate the performance of online gradient descent  based algorithms. Consider the following ``naive" update
\begin{equation}
\label{eq:ogd_update}
    v_{t} = v_{t-1} - \alpha \nabla_{v}c_{t-1}(e_{t-1}, u_{t-1}), %
\end{equation}
where  $v_t$ is updated in the opposite direction of the gradient of the most recent cost. Here $\alpha\in R_+$ is the step size and the recursion starts from some $v_0 \in \mathbb{R}^m$. As the online objective is quadratic, the gradient is available in a closed form and the update can be represented as
$v_{t} = v_{t-1} - 2 \alpha (Ru_{t-1} + B^\top Q e_t)$. For the case of a constant reference signal and an underactuated system, the algorithm can converge to a point that is not necessarily the optimal one with respect to infinite horizon cost minimization. This is due to the greedy behavior of the update that does not take into account future dynamics. In this section, we propose a simple modification to this myopic OGD update \eqref{eq:ogd_update}, called SS-OGD that accounts for this shortcoming.

To motivate the SS-OGD update, we consider the steady state solution of \eqref{eq:lti_system_noise} in closed-loop with the affine control law \eqref{eq:ogd_input} when we fix \ak{$v_i = \bar{v}$ and $r_i = \bar{r}$ for all subsequent timesteps $i\geq t$}. Defining $S := (I-A+BK)^{-1} B$, a closed form solution for the steady state and input is given by~\footnote{Note that $\bar{x}$ and $\bar{u}$ are both defined for a given $\bar{v}$ and $\bar{r}$. The dependence is left for simplicity}
\begin{equation}
\label{eq:steady_state_equation_x_u}
     \Bar{x} = S\bar{v}+SK\bar{r}, \qquad \Bar{u} = (I-KS)(\bar{v}+K\bar{r}).
\end{equation}
One can then find the $\Bar{v}$ which will recover the optimal steady state solution by minimizing the time-averaged infinite horizon steady state cost. \ak{For $\Bar{x}$ and $\Bar{u}$ defined as in \eqref{eq:steady_state_equation_x_u},} this is equivalent to minimizing  
\begin{equation}
    \label{eq:steady_state_program}
\argmin_{\bar{v}}  \{c\left(\Bar{x}-\bar{r},\Bar{u}\right) := \|\bar{x}-\bar{r}\|_Q^2 + \|\bar{u}\|_R^2\},
\end{equation}
whose gradient is given by %
\begin{equation}
   \nabla_{\bar{v}} c\left(\Bar{x}\!-\bar{r},\Bar{u}\right)= 2 \left((I-KS)^\top R \Bar{u}\! +\! S^\top Q (\Bar{x} \!- \bar{r}) \right).
    \label{eq:ss_OGD_ss_solution}
\end{equation}
Since $r$ is, in general not constant, \ak{and the steady state condition is not satisfied} , we suggest a new OGD-like update rule on the bias term $v_t$ that is a modified version of the gradient in~\eqref{eq:ss_OGD_ss_solution}. Specifically, the feedback on the steady state error, $\bar{x}-\bar{r}$, is replaced with the measured error, $x_t-r_t$, and  the steady state input, $\Bar{u}$, with the latest applied input, $u_{t-1}$. This results in the following update, named SS-OGD
\begin{equation}
    \label{eq:ss_ogd_update}
    v_t = v_{t-1} - 2\alpha\left((I-KS)^{\top} R u_{t-1}+ S^\top Qe_t\right).
\end{equation}
\ak{The cost ${c}$ in \eqref{eq:ss_OGD_ss_solution} is defined for the steady state $\bar{x}$ and  input $\bar{u}$ and is thus decoupled from the true online cost $c_t$ in \eqref{eq:online_objective} that reflects the current $e_t$ and $u_t$. These are, in general, not at a steady state and $r_t$ is not constant. Thus, ${c}$ is only an auxiliary, hallucinated cost to construct the update \eqref{eq:ss_ogd_update}.}

\begin{lemma}
\label{lem:strong_convexity}
Under Assumption \ref{ass:standard}, \eqref{eq:steady_state_program} is strictly convex \ak{in $\bar{v}$} for any $K\in \mathbb{R}^{m\times n}$, for which $\rho(A-BK)<1$.
\end{lemma}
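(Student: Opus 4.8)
The plan is to write the objective in \eqref{eq:steady_state_program} out explicitly as a quadratic in $\bar v$ and show its Hessian is positive definite. Substituting \eqref{eq:steady_state_equation_x_u} gives $\bar x-\bar r = S\bar v + (SK-I)\bar r$ and $\bar u = (I-KS)\bar v + (I-KS)K\bar r$, so
\[
c(\bar x-\bar r,\bar u) = \bar v^{\top} M \bar v + (\text{affine in }\bar v), \qquad M := S^{\top}QS + (I-KS)^{\top}R(I-KS).
\]
Here $S=(I-A+BK)^{-1}B$ is well defined precisely because the hypothesis $\rho(A-BK)<1$ makes $I-(A-BK)$ invertible. Since $M\succeq 0$ automatically, strict convexity of \eqref{eq:steady_state_program} is equivalent to $M\succ 0$, i.e.\ to $\ker M = \{0\}$ (a merely positive semidefinite Hessian with a nontrivial kernel would leave $c$ affine, hence not strictly convex, along that kernel direction), so this is the target.

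Next I would take any $\bar v$ with $\bar v^{\top}M\bar v = 0$. As both summands of $M$ are positive semidefinite this forces $Q^{1/2}S\bar v = 0$ and $R^{1/2}(I-KS)\bar v = 0$; since $R\succ 0$ by Assumption~\ref{ass:standard}, the second identity gives $(I-KS)\bar v = 0$, equivalently $KS\bar v=\bar v$. Setting $\bar x := S\bar v$ (the steady state associated with $\bar v$ and $\bar r=0$), the definition of $S$ yields $(I-A+BK)\bar x = B\bar v$, hence $\bar x = A\bar x + B\big(\bar v - KS\bar v\big) = A\bar x$. Thus $\bar x$ is either $0$ or a real eigenvector of $A$ with eigenvalue $1$, and moreover $Q^{1/2}\bar x = Q^{1/2}S\bar v = 0$. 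An eigenvalue on the unit circle that is unobservable for the pair $(Q^{1/2},A)$ contradicts its detectability (Hautus/PBH eigenvector test) under Assumption~\ref{ass:standard}, so $\bar x = S\bar v = 0$; combined with $\bar v = KS\bar v$ this gives $\bar v = 0$. Hence $\ker M=\{0\}$, $M\succ 0$, and \eqref{eq:steady_state_program} is strictly convex.

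The main obstacle is the middle step: recognizing that the purely algebraic conditions ``$Q^{1/2}S\bar v=0$ and $(I-KS)\bar v=0$'' exactly encode an $A$-invariant direction at eigenvalue $1$ that is unobservable through $Q^{1/2}$, so that detectability can be brought to bear; everything else (the quadratic expansion, splitting the PSD Hessian, the final back-substitution $\bar v = KS\bar v$) is routine linear algebra. Note stabilizability is not needed for this particular lemma — only $R\succ 0$ and detectability of $(Q^{1/2},A)$ from Assumption~\ref{ass:standard}, plus the stated condition $\rho(A-BK)<1$ on $K$ to make $S$ well defined.
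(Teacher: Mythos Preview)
Your proof is correct and follows essentially the same route as the paper: both reduce strict convexity to positive definiteness of $M=S^{\top}QS+(I-KS)^{\top}R(I-KS)$, show that a nonzero vector annihilating the second summand produces $x=Sv$ with $Ax=x$, and then invoke detectability of $(Q^{1/2},A)$ to rule this out. Your version is in fact more carefully argued---you explicitly split the kernel condition into $Q^{1/2}S\bar v=0$ and $(I-KS)\bar v=0$, apply the PBH test cleanly, and close with $\bar v=KS\bar v=0$---whereas the paper's proof compresses these steps.
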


\begin{proof}
If the matrix $I-KS$ is singular, there exists a $v\in\mathbb{R}^n$, such that $v=KSv$. Then, for $x = Sv$, at steady state $x = Ax +B(KSv-Kx) = Ax$. Given the detectability condition of the pair $(Q^{\frac{1}{2}},A)$, for any unstable, or marginally stable mode of $A$, the matrix $Q\succ 0$. This ensures that the matrix $S^\top QS+(I-KS)^\top R(I-KS)$ is positive definite, which is equivalent to the strong convexity of \eqref{eq:steady_state_program} .  
\end{proof}

 The modifications from the standard OGD can be interpreted as incorporating the dynamics information in the update rule. As we show in the following, this ensures that in the limit, if the algorithm is stable and the reference signal is constant, the SS-OGD converges to the same point as the solution of the LQR problem minimizing \eqref{eq:cost}. Moreover, through the feedback on the state $e_t$ and input $u_{t-1}$, the update rule \eqref{eq:ss_ogd_update} incorporates a proportional integral (\emph{PI}) control on the measured state. This is demonstrated on a quadrotor control example in Section \ref{sec:numerical_examples}, where, with the inherent integrator dynamics of the quadrotor, the SS-OGD achieves a zero steady state error in tracking a position reference signal with a constant rate of change.
 
 To study the SS-OGD update rule, we introduce the following  evolution of the combined system optimizer dynamics
 \begin{equation}
    \label{eq:combined_system_dynamics}
    z_{t+1} = \Tilde{A}z_t + \Tilde{B}w_t, 
\end{equation}
 where $z_t := [v_t^{\top}~e_t^{\top}]^{\top}$, the matrices $\Tilde{A}\in\mathbb{R}^{p \times p}$ and $\Tilde{B}\in \mathbb{R}^{p \times n}$ are defined in Appendix \ref{sec:appendix_matrices} and $p:=m+n$.

 \begin{assumption}\label{ass:alpha}
    The step size  $\alpha>0$ is such that $\rho(\Tilde{A}) < 1$ .
 \end{assumption}

Since all the variables in $\Tilde{A}$ are known \emph{a priori}, we show that there always exists an $\alpha$ satisfying this assumption and provide a sufficient condition in Appendix \ref{sec:appendix_matrices}.

The following theorem shows that, for a constant  $w_t = \bar{w}$ for all $0\leq t<T$, SS-OGD update \eqref{eq:ss_ogd_update} converges to the solution of 
\begin{equation}
\label{eq:step_wise_ss_min_prob}
        \begin{split}
            (\hat{e}_t,& \hat{v}_t) =~\argmin_{(e,v)}\quad \|e\|_Q^2 + \left\|-Ke+v\right\|_R^2\\
		      &~\text{subject to} \; e = (A-BK)e + Bv + \ak{\underbrace{Ar_t - r_{t+1}}_{w_t},}
        \end{split}
\end{equation}
with $\textstyle{r_{T+1}:=r_T}$. The solution of \eqref{eq:step_wise_ss_min_prob} can be interpreted as the steady state and steady state input that minimize the infinite horizon time-averaged cost \eqref{eq:cost}.
 
\begin{theorem}\label{thm:ss_ogd_convergence}
   Under Assumptions~\ref{ass:standard} and ~\ref{ass:alpha},  if $w_t = \bar{w}$ for all $t\in\mathbb{N}$, the steady state  of \eqref{eq:combined_system_dynamics} coincides with the solution of \eqref{eq:step_wise_ss_min_prob}.
\end{theorem}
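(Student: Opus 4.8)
The plan is to characterise the steady state of \eqref{eq:combined_system_dynamics} and the minimiser of \eqref{eq:step_wise_ss_min_prob} by the \emph{same} pair of algebraic equations, and then to invoke a uniqueness argument on each side. By Assumption~\ref{ass:alpha}, $\rho(\tilde A)<1$, so $I-\tilde A$ is invertible and the recursion \eqref{eq:combined_system_dynamics} driven by the constant input $\bar w$ has the unique fixed point $\bar z=(I-\tilde A)^{-1}\tilde B\bar w$; since $z_{t+1}-\bar z=\tilde A(z_t-\bar z)$ and $\rho(\tilde A)<1$, this is also the unique steady state. Writing $\bar z=[\,\bar v^\top~\bar e^\top\,]^\top$, it then suffices to show that $(\bar e,\bar v)$ solves \eqref{eq:step_wise_ss_min_prob} with $w_t\equiv\bar w$.

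First I would characterise the minimiser of \eqref{eq:step_wise_ss_min_prob}. Since $K$ is the optimal LQR gain, Assumption~\ref{ass:standard} implies $\rho(A-BK)<1$, hence $I-A+BK$ is invertible and the equality constraint in \eqref{eq:step_wise_ss_min_prob} determines $e$ as the affine map $e(v)=Sv+(I-A+BK)^{-1}\bar w$, with $S=(I-A+BK)^{-1}B$. Substituting this into the objective eliminates the constraint and yields an unconstrained quadratic program in $v$ whose Hessian is $2\bigl(S^\top Q S+(I-KS)^\top R(I-KS)\bigr)$, independent of the affine offset; this is exactly the matrix shown to be positive definite in the proof of Lemma~\ref{lem:strong_convexity} (applicable here because $\rho(A-BK)<1$). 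The program is therefore strictly convex, its minimiser is unique, and it is the unique $v$ with vanishing gradient. Computing that gradient --- it equals \eqref{eq:ss_OGD_ss_solution} with $\bar x-\bar r$ and $\bar u$ replaced by $e(v)$ and $u(v):=-Ke(v)+v$ --- the minimiser $(\hat e,\hat v)$ of \eqref{eq:step_wise_ss_min_prob} is characterised by
\[
  \hat e=(A-BK)\hat e+B\hat v+\bar w,\qquad S^\top Q\,\hat e+(I-KS)^\top R\,\hat u=0,\quad \hat u=-K\hat e+\hat v.
\]

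It remains to verify that $(\bar e,\bar v)$ satisfies these equations. The combined dynamics \eqref{eq:combined_system_dynamics} are nothing but \eqref{eq:lti_system_noise} in closed loop with $u_t=-Ke_t+v_t$ (the $e$-block) together with the SS-OGD update \eqref{eq:ss_ogd_update} (the $v$-block). Evaluating the $e$-block at the steady state gives $\bar e=(A-BK)\bar e+B\bar v+\bar w$, i.e.\ constraint feasibility. Evaluating \eqref{eq:ss_ogd_update} at the steady state, where $u_{t-1}=\bar u:=-K\bar e+\bar v$ and $e_t=\bar e$, and dividing by the common factor $-2\alpha$, gives exactly $S^\top Q\,\bar e+(I-KS)^\top R\,\bar u=0$. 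Hence $(\bar e,\bar v)$ obeys the same two equations as $(\hat e,\hat v)$, so by the uniqueness established above they coincide, proving the theorem. I expect the main obstacle to be the middle paragraph: carefully eliminating the constraint in \eqref{eq:step_wise_ss_min_prob} and recognising that (i) the resulting Hessian matches the one in Lemma~\ref{lem:strong_convexity}, so strict convexity transfers, and (ii) the stationarity condition of the reduced problem is literally \eqref{eq:ss_OGD_ss_solution} evaluated at $(e(v),u(v))$. Once this identification is made, the SS-OGD fixed-point equations collapse onto it and the remainder is bookkeeping.
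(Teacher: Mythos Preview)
Your proposal is correct and follows essentially the same route as the paper: eliminate the constraint in \eqref{eq:step_wise_ss_min_prob} via $e(v)=Sv+(I-A+BK)^{-1}\bar w$, invoke Lemma~\ref{lem:strong_convexity} for strict convexity and uniqueness, write down the first-order condition, and then check that the fixed point of \eqref{eq:combined_system_dynamics} satisfies it. The only cosmetic difference is that the paper reads the steady-state $v$-equation off the block form of $\tilde A$ (through the matrices $M,H$) and then simplifies using the identities $I+(A-BK)\hat S=\hat S$ and $S=\hat SB$, whereas you read it directly off the SS-OGD rule \eqref{eq:ss_ogd_update}; your shortcut is legitimate because \eqref{eq:combined_system_dynamics} is, by construction, exactly \eqref{eq:ss_ogd_update} stacked with the closed-loop $e$-dynamics, and it spares you the algebraic unpacking.
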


The proof of the theorem is provided in Appendix \ref{sec:appendix_steady_state_proof}.  As a corollary,  for a constant signal $r_t=\bar{r}$, the update converges to the solution of \eqref{eq:steady_state_program}. Note that this is not always true for the naive OGD update \eqref{eq:ss_ogd_update}, as its fixed point for a fixed disturbance is not necessarily the same as \eqref{eq:step_wise_ss_min_prob}.

\section{Regret Analysis}
\label{sec:regret_analysis}
To characterize the effectiveness of the algorithm for time-varying signals and to provide finite time guarantees, we analyze its dynamic regret and show that it scales with the path length. The next theorem summarizes this main result.

\begin{theorem}
\label{thm:optimal_offline_regret}
Under Assumptions \ref{ass:boundedness}, \ref{ass:standard} and \ref{ass:alpha}, the dynamic regret of the SS-OGD algorithm scales with the path length
\begin{equation*}
    \mathcal{R}^{\mathrm{SS-OGD}}(w,e_0) \leq \mathcal{O}\left(1+ L(T)\right).
\end{equation*}
\end{theorem}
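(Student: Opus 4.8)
The plan is to relate the regret to the path length through a three-step decomposition. First I would bound the gap between the SS-OGD cost and the cost of an auxiliary benchmark, then bound the gap between that benchmark and the optimal offline noncausal controller; summing the two gives the result.

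\emph{Step 1: a clairvoyant steady-state benchmark.} For each $t$, let $(\hat{e}_t,\hat{v}_t)$ be the solution of \eqref{eq:step_wise_ss_min_prob} associated with the current $w_t$, and let $\hat{u}_t = -K\hat{e}_t + \hat{v}_t$ be the corresponding input. By Theorem~\ref{thm:ss_ogd_convergence}, this pair is exactly the fixed point of the combined dynamics \eqref{eq:combined_system_dynamics} when $w$ is frozen at $w_t$; equivalently $\hat{z}_t := [\hat{v}_t^\top~\hat{e}_t^\top]^\top$ satisfies $\hat{z}_t = \tilde A \hat{z}_t + \tilde B w_t$. Since $\rho(\tilde A)<1$ by Assumption~\ref{ass:alpha}, the map $w_t \mapsto \hat z_t$ is linear and bounded, and the increments satisfy $\|\hat z_{t+1} - \hat z_t\| \le c\,\|w_{t+1}-w_t\|$ for some constant $c$ depending on $\tilde A, \tilde B$. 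The key contraction estimate is then: writing the true combined state as $z_t$ and subtracting the two recursions,
\begin{equation*}
    z_{t+1} - \hat z_{t+1} = \tilde A (z_t - \hat z_t) + (\hat z_t - \hat z_{t+1}),
\end{equation*}
so unrolling and using $\|\tilde A^k\| \le c_{\tilde A}\lambda_{\tilde A}^k$ gives $\|z_t - \hat z_t\| \le c_{\tilde A}\lambda_{\tilde A}^t\|z_0 - \hat z_0\| + \sum_{s=0}^{t-1} c_{\tilde A}\lambda_{\tilde A}^{t-1-s}\|\hat z_s - \hat z_{s+1}\|$. Now $\|\hat z_s - \hat z_{s+1}\| \le c\|w_s - w_{s+1}\|$, and since $w_t = Ar_t - r_{t+1}$, one has $\|w_s - w_{s+1}\| \le \|A\|\,\|\Delta r_s\| + \|\Delta r_{s+1}\| \le (\|A\|+1)(\|\Delta r_s\| + \|\Delta r_{s+1}\|)$. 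Because $\|r_t\|\le\bar R$, each $\|w_t\|$ and hence each $\|\hat z_t\|$ is uniformly bounded, giving $\|z_0 - \hat z_0\|\le C_0$. Summing the geometric-convolution bound over $t=0,\dots,T-1$ (swap the order of summation, $\sum_t \lambda_{\tilde A}^{t-1-s} \le (1-\lambda_{\tilde A})^{-1}$) yields $\sum_{t=0}^{T-1}\|z_t - \hat z_t\| \le \mathcal{O}(1 + L(T))$. A parallel bound on $\sum_t \|z_t - \hat z_t\|^2$ also follows (it is dominated by $\max_t\|z_t-\hat z_t\| \cdot \sum_t\|z_t - \hat z_t\|$, and the max is $\mathcal{O}(1)$). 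Feeding this into the quadratic stage costs $c_t$ — each $c_t$ is a quadratic in $(e_t,u_t)$, Lipschitz on the bounded region where all trajectories live — gives
\begin{equation*}
    \sum_{t=0}^{T-1} c_t(e_t,u_t) - \sum_{t=0}^{T-1} c_t(\hat e_t, \hat u_t) \le \mathcal{O}(1 + L(T)),
\end{equation*}
where I use that both trajectories stay in a bounded set (boundedness of $z_t$ follows again from $\rho(\tilde A)<1$ and bounded $w$), so the quadratic costs differ by at most a constant times the trajectory discrepancy, plus the quadratic term handled as above.

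\emph{Step 2: the clairvoyant benchmark versus the true optimum.} It remains to show $\sum_t c_t(\hat e_t,\hat u_t) - J(e_0,u^\star) + \|e_0\|_Q^2 \le \mathcal{O}(1+L(T))$, i.e. that the per-step steady-state minimizers are a good comparator. One route: $(\hat e_t,\hat v_t)$ solves the infinite-horizon time-averaged version of the LQR cost with disturbance frozen at $w_t$; the optimal offline controller $u^\star$ is a linear feedback on state plus a convolution of future disturbances \cite{goel2022power}, so its optimal cost can itself be written via the DARE solution $P$ plus disturbance cross-terms that are telescoping/bounded when $w$ is slowly varying. The cleanest argument is to note that the fixed "track-the-frozen-steady-state" policy is feasible for the offline problem up to boundary effects, so $J(e_0, \hat u) \le J(e_0,u^\star) + (\text{mismatch})$, and the mismatch between using $w_t$ at step $t$ versus the genuine sequence is again controlled by $\sum_t \|w_t - w_{t+1}\| = \mathcal{O}(L(T))$ plus the $\|e_0\|_Q^2$ constant and a transient $\mathcal{O}(1)$ term from the finite horizon / terminal cost $P$. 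Combining Steps~1 and~2 via the triangle inequality on \eqref{eq:dynamic_regret} gives $\mathcal{R}^{\mathrm{SS-OGD}}(w,e_0)\le \mathcal{O}(1+L(T))$.

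\emph{Main obstacle.} The delicate point is Step~2: controlling $\sum_t c_t(\hat e_t,\hat u_t)$ against the genuinely optimal noncausal $J(e_0,u^\star)$. The per-step steady-state solutions ignore the transient cost of moving between consecutive targets and ignore the finite-horizon terminal behavior, so one must argue that (i) the accumulated transient cost of chasing a moving target is $\mathcal{O}(L(T))$ — which is where the stability margin $\rho(A-BK)<1$ and the strong convexity from Lemma~\ref{lem:strong_convexity} enter, ensuring the steady-state map is Lipschitz and the closed loop attenuates target jumps geometrically — and (ii) the offline optimum cannot do better than tracking these steady states except by an additive constant, which uses the closed-form structure of $u^\star$ and the choice of terminal weight $P$ from the DARE (this is exactly where the footnote about $P$ versus a general terminal matrix costing only "an additional constant" is invoked). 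Everything else is routine: geometric series from $\rho(\tilde A)<1$, uniform boundedness from Assumption~\ref{ass:boundedness}, and Lipschitz bounds for quadratics on bounded sets.
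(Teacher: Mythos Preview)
Your Step~1 contraction argument on $z_t-\hat z_t$ is correct and is exactly what the paper does to control its term $s_{1,t}$: the recursion $\varepsilon_t=\tilde A\varepsilon_{t-1}+(\hat z_{t-1}-\hat z_t)$ and the geometric-convolution bound you wrote are in the paper verbatim.

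The genuine gap is Step~2, and you correctly flag it as the ``main obstacle'' but do not close it. The sequence $(\hat e_t,\hat u_t)$ is \emph{not} a trajectory of \eqref{eq:lti_system_noise}: $\hat e_{t+1}\neq A\hat e_t+B\hat u_t+w_t$ in general, so $\sum_t c_t(\hat e_t,\hat u_t)$ is not of the form $J(e_0,\cdot)$, and your sentence ``the fixed track-the-frozen-steady-state policy is feasible for the offline problem up to boundary effects, so $J(e_0,\hat u)\le J(e_0,u^\star)+(\text{mismatch})$'' is simultaneously (i) about the wrong object (the actual trajectory under the controller $\hat u_t$ is not $\hat e_t$) and (ii) in the wrong direction (feasibility would give $J(e_0,u^\star)\le J(e_0,\hat u)$). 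One can in principle compare $\sum_t(\|\hat e_t\|_Q^2+\|\hat u_t\|_R^2)$ to the closed-form noncausal optimal cost, but this requires a separate nontrivial computation that you do not carry out.

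The paper avoids this detour entirely. Its key device is the Cost Difference (Performance Difference) Lemma, which yields the \emph{exact} identity
\[
\mathcal{R}^\pi(w,e_0)=\sum_{t=0}^{T-1}\|u_t^\pi-u_t^\star\|_{R+B^\top PB}^2,
\]
so no intermediate benchmark or cost triangle inequality is needed. Then, using the explicit form $u_t^\star=-Ke_t-\sum_{i=t}^{T-1}K_w^{i,t}w_i$ of the noncausal controller, it splits $\|u_t^{\mathrm{ogd}}-u_t^\star\|$ into (a) $\|v_t-\hat v_t\|$ (your Step~1), (b) a term measuring $\sum_i K_w^{i,t}(w_i-w_t)$, bounded by the $w$-path length and hence $L(T)$, and (c) a horizon-truncation tail $\sum_{i\ge T}K_w^{i,t}w_t$, which is geometrically small. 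Parts (b)+(c) are precisely the replacement for your unfinished Step~2, and they come out directly once you write down $u_t^\star$ explicitly. The missing idea in your plan is this regret-to-input-gap reduction via the Cost Difference Lemma; with it, the steady-state benchmark never has to be compared to $u^\star$ at the level of costs.
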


The proof of the theorem is provided in Section \ref{sec:regret_proof} after some auxiliary results.
\begin{lemma}
\label{lem:cost_difference}
(Cost Difference Lemma \cite{kakade2002approximately}) For any two policies $\pi_1, \pi_2$
\begin{equation*}
    J(e_0,u^{\pi_2}) - J(e_0,u^{\pi_1}) = \sum_{t=0}^{T-1}\mathcal{Q}_t^{\pi_1}(e_t^{\pi_2},u_t^{\pi_2}) - J_t(e_t^{\pi_2},u^{\pi_1}),
\end{equation*}
where $e_t^{\pi_2}$ is the state at time $t$ achieved by applying the policy $\pi_2$, $u_t^{\pi_2}$ is the input generated by the policy $\pi_2$ at time t, $\ak{\mathcal{Q}}_t^{\pi_1}(e,u) = \|e\|_Q^2 + \|u\|_R^2 + J_{t+1}(Ae+Bu+w_t,u^{\pi_1})$ is the Q-function for policy $\pi_1$ and $J_i(e_i,u)$
is the cost-to-go at time step $i$, with initial state $e_i$ and control signal $u$.
\end{lemma}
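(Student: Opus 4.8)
The plan is to prove this standard performance-difference identity by telescoping, directly from the definitions of the $Q$-function and the cost-to-go. Let $e^{\pi_2}_{0:T}$ denote the trajectory generated by $\pi_2$ from $e_0$, so that $e^{\pi_2}_0=e_0$ and, by \eqref{eq:lti_system_noise}, $e^{\pi_2}_{t+1}=Ae^{\pi_2}_t+Bu^{\pi_2}_t+w_t$ for $0\le t<T$, and define for $0\le t\le T$ the quantity $\Delta_t:=J_t(e^{\pi_2}_t,u^{\pi_1})-J_t(e^{\pi_2}_t,u^{\pi_2})$, i.e.\ the cost-to-go gap, measured at the state $\pi_2$ has reached by time $t$, between continuing with $\pi_1$ and continuing with $\pi_2$.

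The key step is to rewrite a single summand on the right-hand side as the one-step difference $\Delta_{t+1}-\Delta_t$. By the definition of the cost-to-go, $J_t(e^{\pi_2}_t,u^{\pi_2})=\|e^{\pi_2}_t\|_Q^2+\|u^{\pi_2}_t\|_R^2+J_{t+1}(e^{\pi_2}_{t+1},u^{\pi_2})$, so the stage cost satisfies $\|e^{\pi_2}_t\|_Q^2+\|u^{\pi_2}_t\|_R^2=J_t(e^{\pi_2}_t,u^{\pi_2})-J_{t+1}(e^{\pi_2}_{t+1},u^{\pi_2})$. On the other hand, since $Ae^{\pi_2}_t+Bu^{\pi_2}_t+w_t=e^{\pi_2}_{t+1}$, the definition of the $Q$-function gives $\mathcal{Q}_t^{\pi_1}(e^{\pi_2}_t,u^{\pi_2}_t)=\|e^{\pi_2}_t\|_Q^2+\|u^{\pi_2}_t\|_R^2+J_{t+1}(e^{\pi_2}_{t+1},u^{\pi_1})$. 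Subtracting $J_t(e^{\pi_2}_t,u^{\pi_1})$ from the $Q$-function identity and replacing $\|e^{\pi_2}_t\|_Q^2+\|u^{\pi_2}_t\|_R^2$ by $J_t(e^{\pi_2}_t,u^{\pi_2})-J_{t+1}(e^{\pi_2}_{t+1},u^{\pi_2})$, the two cost-to-go pairs regroup to give
\[
  \mathcal{Q}_t^{\pi_1}(e^{\pi_2}_t,u^{\pi_2}_t)-J_t(e^{\pi_2}_t,u^{\pi_1}) = \Delta_{t+1}-\Delta_t .
\]

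Summing over $t=0,\dots,T-1$ telescopes to $\Delta_T-\Delta_0$. Here $\Delta_0=J_0(e_0,u^{\pi_1})-J_0(e_0,u^{\pi_2})=J(e_0,u^{\pi_1})-J(e_0,u^{\pi_2})$ since the cost-to-go at time $0$ from $e_0$ is the total cost, while $\Delta_T=0$ because the terminal cost $J_T(e,\cdot)=\|e\|_P^2$ carries no control dependence. Hence the sum equals $-\Delta_0=J(e_0,u^{\pi_2})-J(e_0,u^{\pi_1})$, which is the claim. The argument has no genuine obstacle; the only care needed is bookkeeping — evaluating $\mathcal{Q}_t^{\pi_1}$ along the $\pi_2$-trajectory so that its successor argument $Ae^{\pi_2}_t+Bu^{\pi_2}_t+w_t$ is exactly $e^{\pi_2}_{t+1}$ (this is what makes the quadratic cross-terms cancel across the telescope), and treating the final stage and terminal cost consistently with the paper's last-timestep convention so that $\Delta_T$ indeed vanishes.
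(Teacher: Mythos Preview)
Your telescoping argument is correct and is exactly the standard performance-difference proof the paper has in mind; the paper itself omits the proof, stating only that it is identical to the one for Markov decision processes in \cite{kakade2002approximately}. There is nothing to add.
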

The proof is omitted, as it is identical to the one for Markov decision processes \cite{kakade2002approximately}. The following result for a general policy $\pi$ akin to the result in \cite{zhang2021regret} follows.

\begin{lemma}
\label{lem:regret_formulation}
   \ak{Under Assumption \ref{ass:standard}}, given the system dynamics \eqref{eq:lti_system_noise} and cost function \eqref{eq:cost}, the dynamic regret of any policy $\pi$ is given by
    \begin{equation*}
      \mathcal{R}^\pi(w, e_0)  = \sum_{t = 0}^{T-1} \left(\utp{t}{} - u_t^{\star}\right)^\top \left(R + B^T P B\right) \left(\utp{t}{} - u_t^{\star}\right),
    \end{equation*}
    where $u_t^\pi$ and $u_t^\star$ denote the inputs generated by  $\pi$, and the optimal policy, both evaluated at the policy state $e_t^\pi$ at time $t$.
\end{lemma}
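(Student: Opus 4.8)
The plan is to invoke the Cost Difference Lemma (Lemma~\ref{lem:cost_difference}) with $\pi_1=\pi^\star$, the optimal offline controller of \eqref{eq:optimal_offline_benchmark} regarded as a (time-varying, affine) state-feedback policy, and $\pi_2=\pi$, and then to complete the square in the control variable inside the resulting $Q$-function. The key preliminary observation is that the gain $-K$ and the feedforward terms in the affine representation of $u^\star$ do not depend on the initial state, so $u^\star$ is produced by a policy $\pi^\star$ that is optimal for every initial condition and every starting time. Hence its cost-to-go satisfies $J_t(\cdot,u^{\pi^\star})=V_t(\cdot)$, the optimal value function, and since the terminal weight is the DARE solution $P$ the finite-horizon Riccati recursion sits at its fixed point, giving $V_t(e)=e^\top P e+2b_t^\top e+c_t$ for some $b_t\in\mathbb{R}^n$, $c_t\in\mathbb{R}$ obtained by a backward recursion driven by $w_t,\dots,w_{T-1}$ (with $b_T=0$, $c_T=0$). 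Assumption~\ref{ass:standard} guarantees that such a $P$ exists, that $P\succeq 0$, and that $R+B^\top PB\succ 0$.

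With these facts, Lemma~\ref{lem:cost_difference} yields
\[
\mathcal{R}^\pi(w,e_0)=J(e_0,u^\pi)-J(e_0,u^\star)=\sum_{t=0}^{T-1}\big(\mathcal{Q}_t^{\pi^\star}(e_t^\pi,u_t^\pi)-V_t(e_t^\pi)\big).
\]
Substituting $J_{t+1}(\cdot,u^{\pi^\star})=V_{t+1}(\cdot)$ into $\mathcal{Q}_t^{\pi^\star}(e,u)=\|e\|_Q^2+\|u\|_R^2+J_{t+1}(Ae+Bu+w_t,u^{\pi^\star})$ shows that, for each fixed $e$, the map $u\mapsto\mathcal{Q}_t^{\pi^\star}(e,u)$ is a strictly convex quadratic with Hessian $2(R+B^\top PB)$, a quantity that is the same for every $t$ and $e$; this is the only place the particular terminal cost is used. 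Completing the square gives
\[
\mathcal{Q}_t^{\pi^\star}(e,u)=\min_{u'}\mathcal{Q}_t^{\pi^\star}(e,u')+(u-u_t^\star)^\top(R+B^\top PB)(u-u_t^\star),
\]
where $u_t^\star:=\argmin_{u'}\mathcal{Q}_t^{\pi^\star}(e,u')=-Ke+g_t$ with $g_t=-(R+B^\top PB)^{-1}B^\top(Pw_t+b_{t+1})$; evaluated at $e=e_t^\pi$ this is exactly the input the optimal controller would apply at the state visited by $\pi$, i.e. the $u_t^\star$ in the statement. By Bellman optimality of $\pi^\star$ (it is greedy with respect to its own value function), $\min_{u'}\mathcal{Q}_t^{\pi^\star}(e_t^\pi,u')=V_t(e_t^\pi)$, so each summand collapses to $(u_t^\pi-u_t^\star)^\top(R+B^\top PB)(u_t^\pi-u_t^\star)$, and summing over $t$ proves the lemma. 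The final stage is handled identically, with $Q$ replaced by $P$ in the last-stage cost, consistent with $V_T(e)=\|e\|_P^2$.

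The difficulty here is bookkeeping rather than conceptual. One must (i) argue carefully that the cost-to-go of $\pi^\star$ is precisely the quadratic value function whose Hessian is the constant matrix $P$ — which relies both on $\pi^\star$ being the globally optimal policy and on the terminal weight solving the DARE — and (ii) be explicit that the ``$u_t^\star$'' appearing in the statement is the optimal noncausal controller's action at the realized state $e_t^\pi$, namely $-Ke_t^\pi+g_t$, and not the $t$-th entry of the offline-optimal input sequence generated along its own trajectory $e^\star$. A self-contained variant avoids Lemma~\ref{lem:cost_difference} altogether: take $V_t(e)=e^\top Pe+2b_t^\top e+c_t$ with $b_t,c_t$ chosen so that $V_t$ obeys the Bellman equation, apply the completing-the-square identity above to an arbitrary input sequence, telescope $\sum_{t=0}^{T-1}(\|e_t\|_Q^2+\|u_t\|_R^2)+V_T(e_T)-V_0(e_0)$, evaluate the resulting identity at $u^\pi$ and at $u^\star$ (for which the right-hand side is zero, since $u^\star$ is the minimizer), and subtract.
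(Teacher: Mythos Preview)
Your proof is correct and follows essentially the same route as the paper: invoke the Cost Difference Lemma with the optimal policy as $\pi_1$, note that $J_t(\cdot,u^{\pi^\star})=\min_u\mathcal{Q}_t^\star(\cdot,u)$, observe that the resulting $Q$-function is an extended quadratic in $u$ with Hessian $R+B^\top PB$ (constant in $t$ because the terminal weight is the DARE solution), and complete the square. Your treatment is more explicit about the bookkeeping --- in particular your point (ii), that $u_t^\star$ must be read as the optimal policy evaluated at $e_t^\pi$ rather than the $t$-th entry of the offline optimal sequence, is exactly how the paper silently uses the lemma downstream in the proof of Theorem~\ref{thm:optimal_offline_regret}.
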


\begin{proof}
    Let $ \ak{\mathcal{Q}}^{\star}_t (e,u)$ be the optimal Q-function, associated with the optimal control law $u_t^{\star}$. Then, using Lemma \ref{lem:cost_difference} the dynamic regret of the policy $\pi$ is given by
    \begin{equation}
    \label{eq:regret_cost_diff_lemma}
    \textstyle{
        \mathcal{R}^\pi(w, e_0) =\sum_{t = 0}^{T-1} \mathcal{Q}_t^{\star}(e_t^{\pi},u_t^{\pi}) - \min_{u}\ak{\mathcal{Q}}_t^{\star}(e_t^{\pi},u),}
    \end{equation}
    \ak{i.e., a sum of differences of $\mathcal{Q}^\star_t$, evaluated at $u_t^\pi$ and $u^\star$, its minimizer. For an  input, $u\in \mathbb{R}^m$, and some $f\in\mathbb{R}^m$, $g\in \mathbb{R}$} 
        \begin{align*}
        \ak{\mathcal{Q}}^{\star}_t (e,u) &= \|e\|_Q^2 + \|u\|_R^2 + J_{t+1}(Ae+Bu+w_t,u^{\star})\\
        &= u^{\top}(R+B^{\top}PB)u + f^{\top}u+g,
    \end{align*}
    \ak{where the last equality follows from the closed form of $J_{t+1}(x,u^{\star})$ as an extended quadratic function of $x$ \cite{goel2022power, karapetyan2022regret}. Thus, since $u_t^\star$ minimizes an extended quadratic function, 
    $
    \mathcal{Q}_t^{\star}(e_t^{\pi},u_t^{\pi}) -\mathcal{Q}_t^{\star}(e_t^{\pi},u_t^\star) =\|\utp{t}{} - u_t^\star\|^2_{\left(R + B^T P B\right)}$.}
\end{proof}
For future references, we also recall the Cauchy Product inequality defined for two finite series $\{a_i\}_{i=1}^T$ and $\{b_i\}_{i=1}^T$:
    \begin{equation}
        \label{eq:cauchy_product}\textstyle{
        \sum_{i=0}^T\left|\sum_{j=0}^{i}a_jb_{i-j}\right| \leq  \left(\sum_{i=0}^T|a_i|\right) \left(\sum_{j=0}^T|b_j|\right)}.
    \end{equation}
    
\subsection{Proof of Theorem \ref{thm:optimal_offline_regret}}
\label{sec:regret_proof}
 As Lemma \ref{lem:regret_formulation} suggests, the dynamic regret depends on the stepwise control input difference,
    \begin{align*}
        		&\left\| \uogdt{t} - u_t^\star\right\| = \left\| -K e_t + v_t + K e_t + \sum_{i=t}^{T-1} K_w^{i,t} w_i \right\|\\
          &\leq \underbrace{\left\| v_t +\sum_{i=t}^{\infty} K_w^{i,t} w_t \right\|}_{s_{1,t}} + \underbrace{\left\|\sum_{i=t}^{T-1}K_w^{i,t}\Delta w_{i,t}\right\|}_{s_{2,t}} + \underbrace{\left\|\sum_{i=T}^{\infty}K_w^{i,t}w_t\right\|}_{s_{3,t}},
    \end{align*}
    where $\Delta w_{i,t} = w_i - w_t$ and for all $0\leq t \leq i <T $,
    \begin{equation*}
        K_w^{i,t} = (R+B^{\top}PB)^{-1}B^{\top}\left((A-BK)^{\top}\right)^{i-t}P.
    \end{equation*}
    We proceed by bounding each of the above terms separately.
    
\textbf{Term $\boldsymbol{s_{2,t}}$:}  This captures the deviation of the artificial disturbance term from the one fixed at timestep $t$. By noting that $\Delta w_{i,t}$ can be represented as a telescopic sum,
    \begin{align*}
        s_{2,t} &\leq c_F d \sum_{i=t}^{T-1}\lambda_F^{i-t}\sum_{j=t}^{i-1}\|w_{j+1} - w_j\|
        \\
        & \leq \frac{c_Fd}{1-\lambda_F}\sum_{j=t}^{T-2}\|w_{j+1}-w_j\| \lambda_F^{j-t},
    \end{align*}
    where $F:=A-BK$ and  $d = \| (R+B^{\top}PB)^{-1}B^{\top}\|\cdot\|P\|$. Using \eqref{eq:cauchy_product} 
    \begin{align*}
        &\sum_{t=0}^{T-1}s_{2,t} \leq \frac{c_Fd}{1-\lambda_F}\sum_{t=1}^{T-1}\sum_{j=t}^{T-1}\|w_{j}-w_{j-1}\|\lambda_F^{j-t}\\
        &\leq \frac{c_Fd}{1-\lambda_F}\sum_{j=1}^{T-1}\sum_{t=1}^{j}\|w_{j}-w_{j-1}\|\lambda_F^{j-t} \\ &\leq \frac{c_F d}{(1-\lambda_F)^2}\sum_{j=1}^{T-1}\|w_{j}-w_{j-1}\|
        \leq \frac{c_F d\left(\|A\|+1\right)}{(1-\lambda_F)^2}\cdot L(T)
    \end{align*}
    \textbf{Term $\boldsymbol{s_{3,t}}$:} This captures the effect of truncating the infinite horizon problem to a finite one\ak{
    \begin{align*}
        &s_{3,t} \leq c_Fd\left(\|A\|+1\right)\bar{R}\sum_{i=T}^{\infty}\lambda_F^{i-t}\leq \frac{c_Fd\left(\|A\|+1\right)\bar{R}\lambda_F^{T-t}}{1-\lambda_F}\\
        &\sum_{t=0}^{T-1}s_{3,t} \leq  \frac{c_Fd\left(\|A\|+1\right)\Bar{R}\left(1-\lambda_F^T\right)}{(1-\lambda_F)^2},
    \end{align*}
where $\bar{R}$ is defined in Assumption \ref{ass:boundedness}}.

 \textbf{Term $\boldsymbol{s_{1,t}}$:} This captures the cost of performing a gradient step in the direction of the steady state solution instead of the full solution, for a fixed $w_t$. Note that $-\sum_{i=t}^{\infty} K_w^{i,t} w_t$ is the solution of the following infinite horizon optimization problem and is independent of the initial state \cite{goel2022power, yu2020power}
    \begin{equation*}
        	\begin{split}
		\hat{v}_t &=\argmin_{v}\lim_{T\rightarrow \infty} \frac{1}{T}\sum_{i=0}^{T}\|e\|_Q^2 + \|Ke+v\|_R^2\\
  &\begin{split}
		\text{subject to} \quad &e = (A-BK)e +Bv + w_t,
		\label{eq:infinite_horizon_problem}
  \end{split}
	\end{split}
    \end{equation*}
    which is equivalent to \eqref{eq:step_wise_ss_min_prob}. Hence, by Theorem \ref{thm:ss_ogd_convergence}
    \begin{equation*}
        \sum_{i=t}^{\infty} K_w^{i,t} w_t = -\left[I \; 0\right](I-\Tilde{A})^{-1}\Tilde{B}w_t = -\left[I \; 0\right]\hat{z}_t,
    \end{equation*}
where $\hat{z}_t = [ \hat{v}^{\top}_t~\hat{e}^{\top}_t]^{\top}:= (I - \Tilde{A})^{-1}\Tilde{B}w_t$ is the steady state of the SS-OGD dynamics \eqref{eq:combined_system_dynamics} for a given $w_t$. This term captures the difference between the SS-OGD update term $v_t$ and the steady state value $\hat{v}_t$ for that timestep. We look at the evolution of the augmented state difference; for all $0< t\leq T$
\begin{align}
\label{eq:epsilon_evolution}
    \varepsilon_{t} & \ak{ := z_t - \hat{z}_t} = \Tilde{A}z_{t-1} + \Tilde{B}w_{t-1} - \hat{z}_{t}\\
    &= \Tilde{A}z_{t-1} + (I - \Tilde{A})\hat{z}_{t-1} - \hat{z}_{t}\notag =\Tilde{A}\varepsilon_{t-1} + \hat{z}_{t-1} -\hat{z}_{t}. 
\end{align}
Then $ \varepsilon_t = \At^t\varepsilon_0 + \sum_{i=0}^{t-1}\At^i\left(\hat{z}_{t-i-1}-\hat{z}_{t-i}\right)$ for a given time step $0\leq t \leq T$. Under Assumption~\ref{ass:alpha}
    \begin{equation*}
    \begin{split}
        &\|\varepsilon_t\| \leq c_{\At}\lambda_{\At}^t \ak{\|\varepsilon_0\|}\\
        &+ \left\|\left(I - \At\right)^{-1}\Tilde{B}\right\| \cdot \sum_{i=0}^{t-1}\lambda_{\Tilde{A}}^i\left(\|A\|\|\Delta r_{t-i-1}\| + \|\Delta r_{t-i}\|\right).%
        \end{split}
    \end{equation*}
    Defining \ak{$h = \left\|\left(I - \At\right)^{-1}\Tilde{B}\right\|\left(\|A\|+1\right)$,}  $b = (h+1)\bar{R} + \|x_0\| +\|v_0\|$, and $\Bar{\varepsilon} = c_{\Tilde{A}}\left(b+\frac{2\Bar{R}h}{1-\lambda_{\At}}\right)$
    \begin{align}
         \label{eq:epsilon_bound}
         &\|\varepsilon_t\| \leq c_{\Tilde{A}}b\lambda_{\Tilde{A}}^t + c_{\Tilde{A}}h\sum_{i=0}^{t-1}\lambda_{\Tilde{A}}^i\left(\|\Delta r_{t-i}\| + \|\Delta r_{t-1-i}\|\right),\\
         \notag
         &s_{1,t} \leq \|\varepsilon_t\| \leq c_{\Tilde{A}}b\lambda_{\Tilde{A}}^t + c_{\Tilde{A}}h\sum_{i=0}^{t}\lambda_{\Tilde{A}}^i\|\Delta r_{t-i}\|,\\
         \notag
        &\sum_{t=0}^{T-1}s_{1,t}\leq\sum_{t=0}^{T-1}\|\varepsilon_t\| \leq \frac{c_{\Tilde{A}}}{1-\lambda_{\At}}\left(b+hL(T)\right).
    \end{align}
 There exist $s_2,s_3\in\mathbb{R}_+$, such that $s_{2,t}\leq s_2, \; s_{3,t}\leq s_3$ and from  \eqref{eq:epsilon_bound} $s_{1,t}\leq \Bar{\varepsilon}$ for all $t\in\mathbb{N}$. Using Lemma \ref{lem:regret_formulation} and denoting $\Bar{P} = 4\|R+B^{\top}PB\|$
    \begin{align*}
        \mathcal{R}(w,e_0) < \Bar{P}\sum_{t=0}^{T-1}\left(s_{1,t}^2+s_{2,t}^2+s_{3,t}^2\right)\leq {\mathcal{O}\left(1+ L(T)\right)}.
    \end{align*}
\blue{Note that, unlike the regret bound of the FOSS algorithm in \cite{li2019online}, the constant multiplying $L(T)$ above does not depend on $\bar{R}$, but only on system parameters. This implies that the complexity term captures only the relative distance of the references and is not amplified by their upper bounds.}

\subsection{Steady State Benchmark}
Given Theorem \ref{thm:ss_ogd_convergence}, one can also compare  SS-OGD  to the steady state optimal solution for each timestep. Consider
\begin{equation}
 \label{eq:steady_state_controller}
 \hat{u}_t = -K\hat{e}_t+\hat{v}_t   
\end{equation} 
for all $0\leq t< T$, where $\hat{e}_t$ and $\hat{v}_t$ solve \eqref{eq:step_wise_ss_min_prob}. This steady state controller can be interpreted as an optimal benchmark that is decoupled from the system dynamics, has access to the current cost $c_t$, \ak{and hence to the one step ahead reference, $r_{t+1}$,} and solves for its optimal, steady state solution.
The following Lemma provides a side result on the regret of the SS-OGD algorithm with respect to the steady state controller \eqref{eq:steady_state_controller},  $\mathcal{R}_{\mathrm{SS}}^{\mathrm{SS-OGD}}(w,e_0) := J(e_0,u^\mathrm{SS-OGD}) - J(e_0,\hat{u})$.
\begin{lemma}
Under Assumptions \ref{ass:boundedness}, \ref{ass:standard}, and \ref{ass:alpha}, the regret of the SS-OGD algorithm with respect to the steady state benchmark \eqref{eq:steady_state_controller} scales with the reference path length
\begin{equation*}
    \mathcal{R}_{\mathrm{SS}}^{\mathrm{SS-OGD}}(w,e_0) \leq \mathcal{O}\left(1+ L(T)\right).
\end{equation*}
\end{lemma}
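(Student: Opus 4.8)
\emph{Proof plan.} The plan is to avoid repeating the path-length analysis and instead reuse Theorem~\ref{thm:optimal_offline_regret}. Write $\mathcal{R}^{\hat u}(w,e_0):=J(e_0,\hat u)-J(e_0,u^\star)$ for the ordinary (offline) dynamic regret of the steady-state controller~\eqref{eq:steady_state_controller}. Expanding the definitions gives the exact identity
\begin{align*}
  \mathcal{R}_{\mathrm{SS}}^{\mathrm{SS-OGD}}(w,e_0)
  &= \big(J(e_0,u^{\mathrm{SS-OGD}})-J(e_0,u^\star)\big) - \big(J(e_0,\hat u)-J(e_0,u^\star)\big)\\
  &= \mathcal{R}^{\mathrm{SS-OGD}}(w,e_0) - \mathcal{R}^{\hat u}(w,e_0).
\end{align*}
Since $u^\star$ in~\eqref{eq:optimal_offline_benchmark} minimizes $J(e_0,\cdot)$ over \emph{all} input sequences compatible with~\eqref{eq:lti_system_noise}, and the sequence generated by~\eqref{eq:steady_state_controller} is one of them, $J(e_0,\hat u)\ge J(e_0,u^\star)$, i.e. $\mathcal{R}^{\hat u}(w,e_0)\ge 0$. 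Hence $\mathcal{R}_{\mathrm{SS}}^{\mathrm{SS-OGD}}(w,e_0)\le \mathcal{R}^{\mathrm{SS-OGD}}(w,e_0)$, and the claim follows directly from Theorem~\ref{thm:optimal_offline_regret}.

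If one prefers a more structural argument, I would instead apply Lemma~\ref{lem:regret_formulation} to the (admissible) policy $\hat u$, which expresses $\mathcal{R}^{\hat u}(w,e_0)$ as a sum of squared $(R+B^\top PB)$-weighted norms and hence makes its nonnegativity manifest, and then conclude exactly as above. The only remaining work is then the estimate of $\mathcal{R}^{\mathrm{SS-OGD}}$, already carried out in Section~\ref{sec:regret_proof} via the stepwise decomposition $\|u_t^{\mathrm{SS-OGD}}-u_t^\star\|\le s_{1,t}+s_{2,t}+s_{3,t}$ and the summable bounds $\sum_t s_{1,t}=\mathcal{O}(1+L(T))$, $\sum_t s_{2,t}=\mathcal{O}(L(T))$, $\sum_t s_{3,t}=\mathcal{O}(1)$. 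Theorem~\ref{thm:ss_ogd_convergence} enters only to guarantee that~\eqref{eq:step_wise_ss_min_prob} has a unique solution, so that $\hat u$ --- equivalently the steady state $\hat{z}_t=(I-\At)^{-1}\Bt w_t$ of~\eqref{eq:combined_system_dynamics} --- is well defined, which is what makes this benchmark a meaningful object of comparison.

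I expect essentially no obstacle beyond recognizing that a \emph{two-sided} bound is neither claimed nor true in general, which is why the statement is one-sided. Indeed, $\hat u$ acts in feedforward through $-K\hat{e}_t+\hat{v}_t$, built from the \emph{steady-state} error rather than the realized one, so the error it induces on~\eqref{eq:lti_system_noise} propagates through $A$, not through the Schur matrix $A-BK$; when $A$ is unstable this error, and with it $J(e_0,\hat u)$, can grow geometrically in $T$, in which case $\mathcal{R}_{\mathrm{SS}}^{\mathrm{SS-OGD}}$ is simply very negative and the asserted upper bound is vacuous. Thus the content of the lemma is entirely the already-completed bound on $\mathcal{R}^{\mathrm{SS-OGD}}$; passing to the steady-state benchmark costs nothing more than the observation $\mathcal{R}^{\hat u}\ge 0$.
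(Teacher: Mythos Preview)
Your argument is logically sound under the literal reading of $J(e_0,\hat u)$ as the cost of the open-loop sequence $\hat u_t=-K\hat e_t+\hat v_t$ applied to the true dynamics~\eqref{eq:lti_system_noise}: then indeed $J(e_0,\hat u)\ge J(e_0,u^\star)$ and the lemma reduces to Theorem~\ref{thm:optimal_offline_regret}. This is shorter and more elementary than the paper's route.

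The paper, however, proves something slightly different. Its benchmark is explicitly ``decoupled from the system dynamics'': the cost attributed to the steady-state controller is $\sum_t \hat z_t^\top \Qt_t \hat z_t=\sum_t \|\hat e_t\|_Q^2+\|\hat u_t\|_R^2$, with $\hat z_t=(I-\At)^{-1}\Bt w_t$ the idealized steady-state pair, \emph{not} the trajectory obtained by feeding $\hat u$ into~\eqref{eq:lti_system_noise}. The paper then writes $\mathcal{R}_{\mathrm{SS}}^{\mathrm{SS-OGD}}=\sum_t z_t^\top\Qt_t z_t-\hat z_t^\top\Qt_t\hat z_t$, bounds each summand by $\|\Qt\|\,\|\varepsilon_t\|\,(\|z_t\|+\|\hat z_t\|)\le \|\Qt\|(2h\bar R+\bar\varepsilon)\|\varepsilon_t\|$, and reuses the path-length estimate $\sum_t\|\varepsilon_t\|\le \mathcal{O}(1+L(T))$ already derived for the $s_{1,t}$ term in Section~\ref{sec:regret_proof}.

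Under that intended reading your shortcut breaks: the sequence $(\hat e_t)_t$ does not satisfy~\eqref{eq:lti_system_noise} (unless $w_t$ is constant), so $\sum_t\hat z_t^\top\Qt_t\hat z_t$ is not the cost of any admissible input and can lie strictly below $J(e_0,u^\star)$; the inequality $\mathcal{R}^{\hat u}\ge 0$ is then unavailable. Your own final paragraph almost identifies the issue---you note that applying $\hat u$ to the true system propagates the error through $A$, which for unstable $A$ makes the ``true'' $J(e_0,\hat u)$ explode---but that is precisely why the paper does \emph{not} use the true closed-loop cost of $\hat u$ as the comparator. In short: your proof is correct for the weaker benchmark suggested by the notation, while the paper's direct $\varepsilon_t$ argument is what is needed for the stronger, decoupled benchmark it actually has in mind.
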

\begin{proof}
    The regret can be expressed as a function of the combined error state $\varepsilon$ evolving according to \eqref{eq:epsilon_evolution}. Defining $\Qt_i:=\Qt$ for all $0\leq i<T$ and $\Qt_T$ as in \eqref{eq:Q_tilde} in Appendix \ref{sec:appendix_matrices}
    \begin{equation*}
             \mathcal{R}_{\mathrm{SS}}^{\mathrm{SS-OGD}}(w,e_0) \leq \|\Qt\|\left(2h\Bar{R}+\Bar{\varepsilon}\right)\sum_{t=0}^T\|\varepsilon_t\|,
    \end{equation*}
    \ak{using \eqref{eq:epsilon_bound} and $\|\hat{z}_t\|\leq h\Bar{R},\; \forall t$. Then}
        \begin{equation*}
    \mathcal{R}_{\mathrm{SS}}^{\mathrm{SS-OGD}}(w,e_0)\leq \frac{c_{\Tilde{A}}\|\Qt\|\left(2h\Bar{r}+\Bar{\varepsilon}\right)}{1-\lambda_{\At}}\left(b+hL(T)\right),
    \end{equation*}
    \ak{using \eqref{eq:epsilon_bound}, and  the Cauchy Product inequality \eqref{eq:cauchy_product}.}
\end{proof}

\section{Numerical Example}
\label{sec:numerical_examples}

The SS-OGD algorithm is implemented on a \ak{linearized quadrotor model} \cite{beuchat2019n} in closed-loop with a \emph{PI} velocity controller \cite{Karapetyan_Distributed_Control_of}, to track a reference trajectory in two dimensions.  In particular, we consider the following model 
\begin{align*}
    &\resizebox{!}{1.25cm}{$A = \begin{bmatrix}
		1.000 & 0 & 0.096 & 0 & 0 & 0.040\\
		0 & 1.000 & 0 & 0.096 & -0.040 & 0\\
        0 & 0 & 0.894 & 0 & 0 & 0.703\\
        0 & 0 & 0 & 0.894 & -0.703 & 0\\
        0 & 0 & 0 & 0.193 & 0.452 & 0\\
        0 & 0 & -0.193 & 0 & 0 & 0.452
	\end{bmatrix}$}, \\ 
	&\resizebox{!}{0.55cm}{$B = \begin{bmatrix}
	0.004 & 0 & 0.106 & 0 & 0 & 0.193\\
     0 & 0.004 & 0 & 0.106 & -0.193 & 0
	\end{bmatrix}^{\top}$},
\label{eq:naive_ogd_exp_sysmat}
\end{align*}
where the state $x := \begin{bmatrix}p_x&p_y&v_x&v_y&\beta&\rho\end{bmatrix}^\top$ contains the horizontal position, velocity, the roll and pitch angles of the quadrotor, and the input $u: = \begin{bmatrix}v_x^{t}&v_y^{t}\end{bmatrix}^\top$ sets the target horizontal velocities. We take $Q =\mathop{\rm diag}\left(100,100,1,1,0,0\right)$ and $R=0.1\cdot I$. 

In the first experiment, the drone tracks the shape of the letters \textbf{IFA} for an \textit{a priori} unknown reference with a fixed $\Delta r_t$ for all timesteps. SS-OGD's performance is compared to that of the \ak{causal CE controller that solves for the time-averaged infinite horizon steady state cost by fixing all future references to $r_t$, i.e. $\textstyle{r_{i} = r_t},\; t< i<T$. This is equivalent to solving \eqref{eq:step_wise_ss_min_prob} and fixing $r_{t+1} = r_t$. The CE controller does not have access to $r_{t+1}$, as opposed to the steady state benchmark in \eqref{eq:steady_state_controller}.} The results in Figure \ref{fig:ifa} show that even though the CE controller appears to be tracking the reference better in the $(p_x, p_y)$ plot, the \ak{time plot} reveals that it lags behind the reference trajectory, resulting in around $3$ times higher regret, compared to SS-OGD. As the reference signal has a constant rate of change, the double integrator dynamics of the open loop transfer function from the error to the state, allow SS-OGD to achieve perfect position tracking. \ak{When this is not the case SS-OGD again outperforms the CE controller.}

 \begin{figure*}
 \begin{minipage}[t]{0.68\textwidth}
  \centering\captionsetup{width=\textwidth}
  \includegraphics[width=0.88\linewidth]{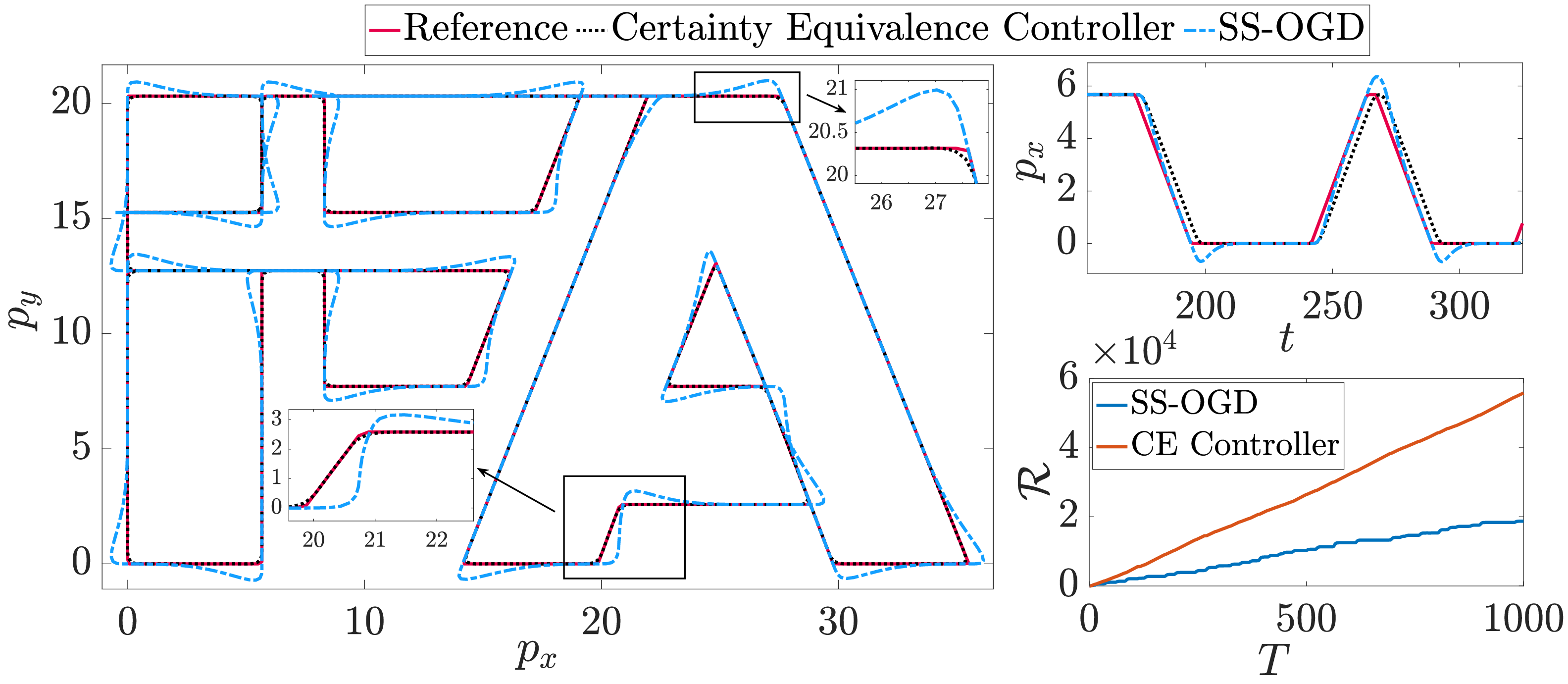}    %
  \caption{Tracking a 2-D shape with a quadrotor model. The horizontal position plot (left panel) \ak{shows the apparent better tracking of the CE controller}. However, the time plot (top right panel) shows \ak{its} visible time lag; by contrast SS-OGD quickly converges to the reference. This leads to a lower rate of regret for SS-OGD (bottom right panel).}
  \label{fig:ifa}
\end{minipage}%
\begin{minipage}[t]{0.02\textwidth}
\hfill
\end{minipage}
\begin{minipage}[t]{.28\textwidth}
  \centering\captionsetup{width=\textwidth}
  \includegraphics[width=0.88\linewidth]{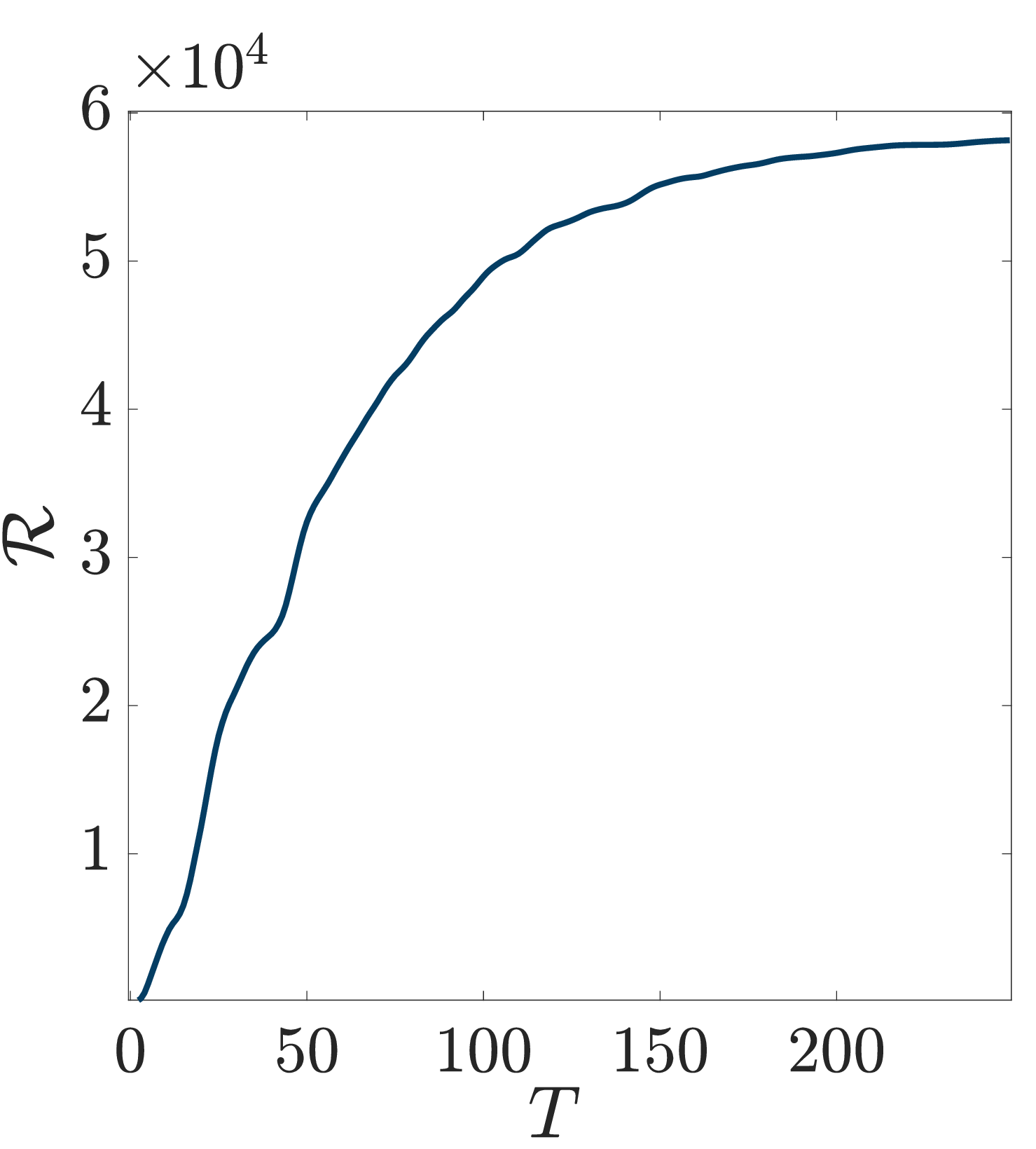}    %
 \caption{\blue{Empirical regret of SS-OGD with a finite reference path length converges to a finite value, as expected from the theoretical bound.}} 
\label{fig:regret}
\end{minipage}
\end{figure*}
In a second experiment  the empirical worst-case regret as a function $T$ \ak{is calculated}. For each $T$, $60$ random reference signals are simulated and the highest value of regret is noted. The references are generated such that $\|\Delta r_t\|$ decreases with a constant factor of $0.99$. This ensures a finite path length \ak{and therefore a finite regret} in the limit, as \ak{shown} in Figure \ref{fig:regret}, and in agreement with the upper bound in Theorem \ref{thm:optimal_offline_regret}.
\section{Conclusion}
In this letter, we reformulate the \ak{online} LQT problem as an online control problem subject to adversarial disturbances. Within this framework, we propose a novel online gradient descent-based algorithm, called SS-OGD, and show that its dynamic regret scales with the path length of the reference signal. We validate the results on numerical examples with a quadrotor model. \blue{The improvement of the regret coefficients, as well as the case where the references are generated by some unknown dynamics is left to be studied in future work.}

\appendices
\section{System-Optimizer Dynamics}
\label{sec:appendix_matrices}
The combined system-optimizer dynamics matrices are 
\begin{equation*}
\Tilde{A} = \begin{bmatrix}
			I-\alpha M & -\alpha H \\
			B & A-BK
\end{bmatrix}, \qquad
\Tilde{B} = \begin{bmatrix}
			-2 \alpha S^T Q \\ I
	\end{bmatrix},
\end{equation*}
where $M: = 2 \left(S^T Q B \ak{+} (I-KS)^\top R\right)$, and $H:=2(S^T Q (A-BK)-(I-KS)^\top R K)$. The objective function in \eqref{eq:step_wise_ss_min_prob} can be equivalently written as $z_t^{\top}\Tilde{Q}z_t$ for $0<t<T$ and as  $z_T^{\top}\Tilde{Q}_Tz_T$ for $t=T$, where
\begin{equation}
\label{eq:Q_tilde}
    \Tilde{Q} \!=\! \begin{bmatrix}
		R & -RK \\ -K^{\top}R & Q\!+\!K^{\top}RK
 	\end{bmatrix}\!,~
      \Tilde{Q}_T \!=\! \begin{bmatrix}
		\boldsymbol{0}_{m \times m} & \boldsymbol{0}_{m \times n} \\ \boldsymbol{0}_{n \times m} & P
 	\end{bmatrix}\!.
\end{equation}

Consider the  coordinate transformation  $\Tilde{A}_V := V\Tilde{A}V^{-1}$
\begin{equation*}
    V = \begin{bmatrix}
        I & \boldsymbol{0}_{m \times n}\\
        -S & I
    \end{bmatrix}, \Tilde{A}_V = \begin{bmatrix}
        I -\alpha\overline{M} & -\alpha H\\
        \alpha S\overline{M} & \alpha SH + (A-BK)
    \end{bmatrix},
\end{equation*}
with $\overline{M} := M+HS = 2 \left(S^{\top}QS + (I-KS)^{\top}R(I-KS)\right)$  positive definite, as shown in Lemma \ref{lem:strong_convexity}. Using the small gain theorem for interconnected systems \cite{green2012linear}, the following, along with $\alpha<2/\rho(\overline{M})$ is a sufficient condition for the stability of $\Tilde{A}_V$ and therefore of the  dynamics \eqref{eq:combined_system_dynamics}
\begin{equation*}
    \alpha\cdot\frac{\|S\overline{M}\|\|H\|}{\lambda_{min}(\overline{M})}\cdot \max_{w\in\mathbb{R}}\big\|\big[e^{jw}I - \alpha SH - (A-BK)\big]^{-1}\big\|<1.
\end{equation*}
Since $A-BK$ is stable, there always exists an arbitrarily small $\alpha>0$  such that the above is fulfilled.
\section{Proof of Theorem~\ref{thm:ss_ogd_convergence}}
\label{sec:appendix_steady_state_proof}
Given a disturbance vector $w_t$ and a bias input $v$, the steady state  of the dynamics \eqref{eq:lti_system_noise} with the control law \eqref{eq:ogd_input} is given by $e = Sv + \hat{S}w_t$, where $\hat{S} := (I -A+BK)^{-1}$.
Substituting this in the objective function of \eqref{eq:step_wise_ss_min_prob}, one can confirm that the $v$ that minimizes that cost is the unique (as shown in Lemma \ref{lem:strong_convexity}) solution of $    \left(S^{\top} QS + \ak{(I-KS)^T  R(I-KS)}\right) v = \left( (I-KS)^{\top} R K \hat{S} - S^{\top} Q \hat{S}\right) w_t$. Using the definition of $\Tilde{A}$, the steady state $\hat{v}$ of the SS-OGD update~\eqref{eq:ss_ogd_update} for a constant $w_t$ \ak{solves $ 0= M\hat{v} + H(S\hat{v}+\hat{S}w_t)+2 S^{\top}Qw_t$}. Then
\begin{align*}
    0 &= S^T Q (\ak{I}+(A-BK)\hat{S})B \hat{v} +(I-KS)^T R(I-KS) \hat{v} \\ 
			&  + S^T Q (I+(A-BK) \hat{S}) w_t-(I-KS)^T R K \ak{\hat{S}w_t}.
\end{align*}
Since $I+(A-BK) \hat{S} = \hat{S}$, \ak{and $S= \hat{S}B$}, the two equations coincide, leading to the unique steady state solution $\hat{v}$.
\bibliographystyle{ieeetr}
\bibliography{bibliography}

\end{document}